\newtheorem{thm}{Theorem}
\newtheorem{prop}{Proposition}
\newtheorem{remark}{Remark}
\newtheorem{definition}{Definition}
\newcommand{\Epsilon}{\mathrm{E}}
\newcommand{\Proj}{\text{Proj}}
\title{\LARGE \bf
State Estimation and Control for Stochastic Quantum Dynamics with Homodyne Measurement: Stabilizing Qubits under Uncertainty
}
\author{Nahid Binandeh Dehaghani, A. Pedro Aguiar, Rafal Wisniewski
\thanks{N. B. Dehaghani and A. P. Aguiar are with the Research Center for Systems and Technologies (SYSTEC), ARISE, and the Electrical and Computer Engineering Department, Faculty of Engineering, University of Porto, Rua Dr. Roberto Frias, 4200-465 Porto, Portugal
{\tt\small \{nahid,pedro.aguiar\}@fe.up.pt}}
\thanks{R. Wisniewski is with Department of Electronic Systems, Aalborg University, Fredrik Bajers vej 7c, DK-9220 Aalborg, Denmark
        {\tt\small raf@es.aau.dk}}
\thanks{
The authors gratefully acknowledge the financial support provided by the Foundation for Science and Technology (FCT/MCTES) within the scope of the PhD grant 2021.07608.BD, the Associated Laboratory ARISE (LA/P/0112/2020), the R\&D Unit SYSTEC through Base (UIDB/00147/2020) and Programmatic (UIDP/00147/2020) funds and project RELIABLE (PTDC/EEI-AUT/3522/2020), all supported by national funds through FCT/MCTES (PIDDAC). The work has been done in honor and memory of Professor Fernando Lobo Pereira.}}
\begin{document}

\maketitle
\thispagestyle{empty}
\pagestyle{empty}

\begin{abstract}
This paper introduces a Lyapunov-based control approach with homodyne measurement. We study two filtering approaches: (i) the traditional quantum filtering and (ii) a modified version of the extended Kalman filtering. We examine both methods in order to directly estimate the evolution of the coherence vector elements, 
using sequential homodyne current measurements. The latter case explicitly addresses the dynamics of a stochastic master equation with correlated noise, transformed into a state-space representation, ensuring by construction the quantum properties of the estimated state variable. 
In addition, we consider the case where the quantum-mechanical Hamiltonian is unknown, and the system experiences uncertainties. 
In this case, we show as expected that both filters lose performance, exhibiting large expected estimation errors. To address this problem, we propose a simple multiple model estimation scheme that can be directly applied to any of the studied filters.
We then reconstruct the estimated density operator \( \hat{\rho} \), describing the full state of the system, and subject it to a control scheme. The proposed switching-based Lyapunov control scheme, which is fed with \( \hat{\rho} \), guarantees noise-to-state practically stable in probability of the desired stationary target set with respect to the estimation error variance. We demonstrate our approach's efficacy in stabilizing a qubit coupled to a leaky cavity under homodyne detection in the presence of uncertainty in resonance frequency.
\end{abstract}

\begin{keywords}
Quantum control, Stochastic dynamics, Quantum filtering, Switching Lyapunov control. 
\end{keywords}

\maketitle

\section{Introduction}
\label{sec:introduction}
\PARstart{A}{} particularly fertile domain within quantum control is that of feedback control, where the control strategy is dynamically adjusted based on real-time feedback signals emanating from the evolving system \cite{sayrin2011real}. In this regard,
significant advancements have been achieved in the development and enhancement of the underlying principles of quantum probability and filtering theory \cite{belavkin1983theory, belavkin1992quantum, bouten2007introduction, gough2012quantum, rouchon2015efficient,emzir2017quantum}.

\begin{figure}[t]
	\centering
	\includegraphics[scale=0.3]{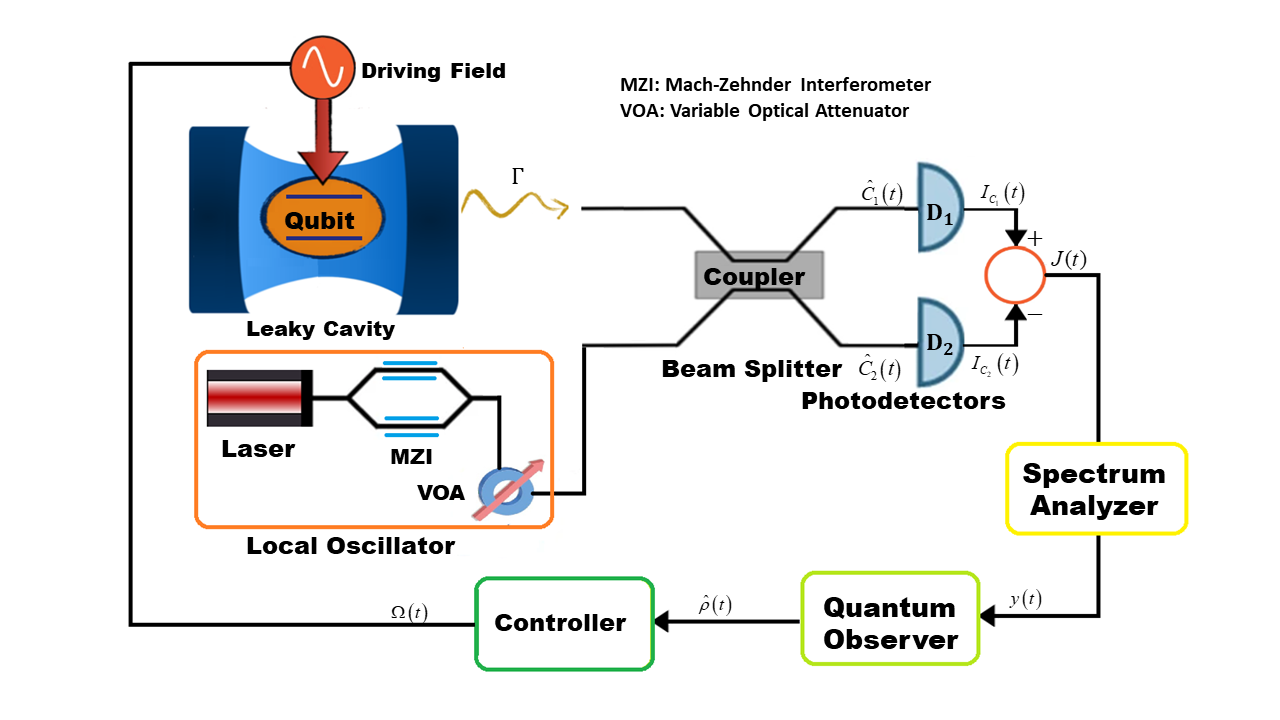}
        \caption{\small Illustration of the proposed configuration: A qubit coupled to a leaky cavity is continuously monitored via homodyne detection. The emitted radiation from the qubit is combined with a local oscillator laser at a beam splitter. The transformations of the field operators through the beam splitter are denoted by $C_1(t)$ and $C_2(t)$, which are then detected by photodetectors \( D_1 \) and \( D_2 \), respectively. The difference of the two produced signals ${I_c}_1$ and ${I_c}_2$ generates the resulting signal $J(t)$. This signal is sent to a quantum observer, which estimates the quantum state \( \hat{\rho}(t) \). The estimated state is then transmitted to a controller. The controller adjusts the driving field $\Omega(t)$ through the control Hamiltonian to the qubit, aiming to generate and stabilize the qubit into a desired target set.} 
        \label{setup}
\end{figure}
In this study, see Fig. \ref{setup}, we spotlight a qubit undergoing continuous homodyne detection described by a stochastic master equation. Creating a controller is crucial because it it customizes the control system to manage specific state transformation tasks based on the homodyne current measurements. 
Inherent stochasticity of the analyzed problem accentuates the complexity of the control task, as the controller must dynamically adapt to the random evolution of the quantum state in each trajectory. Additionally, discerning the actual state of the qubit solely from the instantaneous homodyne current proves challenging. Thus, a sophisticated estimating mechanism is imperative to extract pertinent information about the qubit's state from the time series of measurement results. Herein, we outline our contributions.

\begin{itemize}

\item We elaborate on three key estimation techniques: Quantum Filter Estimator, Quantum Kalman Filter Estimator, and Multiple-Model Adaptive Estimator. By leveraging an equivalent state space representation of a general mixed state of a qubit, we offer a robust mathematical framework. This framework describes the qubit's state evolution through stochastic system dynamics in Itô form, which can conveniently be used in the estimation techniques. 
In this framework, we first implement the conventional quantum filtering method \cite{bouten2007introduction} as a fundamental tool for quantum state estimation, however for the stochastic system dynamics in Itô state space. Secondly, the extension of the classical Kalman filter to the quantum domain is addressed by handling the nonlinearity of the stochastic system dynamics and the correlation between process and measurement noise, utilizing a decorrelating approach. 
Additionally, we integrate the structural properties of a valid quantum state into the observer equations, ensuring that state estimates always belong to the Bloch ball, and propose a projection operator to maintain the physical relevance of estimated states. The construction of a quantum Kalman filter estimator is detailed, continuously updating state estimates as new measurements are acquired. It is worth noting that Kalman-based filtering methodologies have been previously implemented in the quantum domain \cite{emzir2017quantum} and have found applications in quantum systems \cite{ma2022review}. Here, we go beyond the usual Kalman techniques, e.g., in \cite{ corcione2023state}, by explicitly addressing the fact that the quantum system exhibits correlated process noise with measurement noise. More importantly, in our design, we guarantee by construction the properties of valid quantum states. 
Thirdly, we develop a class of Multiple-Model Adaptive Estimators (MMAEs) to address the challenge of parametric model uncertainty when the quantum-mechanical Hamiltonian is unknown or subject to fluctuations. As anticipated, both filters suffer from reduced performance under these conditions, leading to significant estimation errors. To mitigate this, we employ parallel filters, each tailored to a specific model, combine their outputs by dynamically adjust weights based on measurement data. We show that the MMAE exhibits good performance and can be directly applied to any of the proposed Quantum Filter and Quantum Kalman Filter estimators, demonstrating their applicability and effectiveness in different scenarios. 
These contributions collectively advance the state of the art in quantum state estimation, providing methods and frameworks for effective quantum system management.

\item We introduce a robust switching-based Lyapunov control strategy tailored explicitly for 
quantum stochastic systems with state uncertainty. It is important to recognize that Lyapunov control approaches have been widely used in the field of quantum control for ensuring the stability of quantum systems or guiding them to a desired state \cite{mirrahimi2007stabilizing, hou2012optimal, cardona2018exponential}. More recently, a quantum measurement-based feedback control via switching strategies has been proposed in \cite{grigoletto2021stabilization}
for quantum systems that can be controlled with a series of driving dynamics. In addition, \cite{liang2024dissipative}
analyzes switching techniques designed for the rapid stabilization of pure states and subspaces within quantum filtering equations.
Our contribution lies in the design of a controller in a different setup of \cite{grigoletto2021stabilization, liang2024dissipative} with the aim of steering the system state $\rho$ towards a pre-defined stationary target set, by directly using the estimate of $\rho$. In addition, we explicitly show that the closed-loop system guarantees noise-to-state practically stable in probability with respect to the variance of the estimation error. This means that the distance of $\rho(t)$ to the target set is bounded in expectation by some monotone function of the $\sup_{\tau \in [0, t]} \tilde\sigma^2(t)$, where $\tilde\sigma^2(t)$ is the incremental variance associated to the noisy estimation error $\Tr{\tilde{\rho}}=\Tr{\rho-\hat{\rho}}$.


\item We consider a complete framework that combines quantum filtering with switching Lyapunov-based feedback control, ensuring robustness despite uncertainty and noise. It probabilistically guarantees practical convergence of the system's state to a small neighborhood of the desired invariant set, whose size degradates gracefully with the intensity of the estimation error.
\end{itemize}

\subsection{Notation}
We use the superscripts \( T \) and \( \dagger \) to denote the transpose and conjugate transpose of a matrix, respectively. Partial derivatives are denoted using \( \partial \), where \( \frac{\partial f}{\partial t} \) signifies the partial derivative with respect to the variable \( t \). The notation \([ \cdot, \cdot ]\) represents a commutator of matrices. The imaginary unit is denoted by \( i = \sqrt{-1} \), and the identity matrix is denoted by \( I \). For a generic matrix \( A \), the trace is shown by the operator \( \Tr{A} \). 
The set of all quantum density operators $\rho$ in a 
finite-dimensional complex Hilbert space $\mathbb{H}$ is denoted by ${\mathbb{H}}_\rho = \left\{ \rho \in \mathcal{L}(\mathbb{H}) \ \middle|\ \rho = \rho^\dagger, \ \rho \geq 0, \ \text{Tr}(\rho) = 1 \right\}$, where $\mathcal{L}(\mathbb{H})$ denotes the set of linear operators on $\mathbb{H}$. For a variable \( X \), the classical expectation value is indicated by \( \mathbb{E}[X] \), while the quantum expectation value is denoted by \( \langle X \rangle = \Tr{X \rho} \). The notation \( U(2) \) represents the unitary group consisting of all \(2\times 2 \) unitary matrices, which are relevant to qubit operations. 
We denote a continuous function \(\zeta: \mathbb{R}_+ \rightarrow \mathbb{R}_+\) to be of class \(\mathcal{K}\) if it satisfies the following two properties: (i) Positive definiteness: \(\zeta(s) > 0\) for all \(s \in \mathbb{R}_+ \setminus \{0\}\), and \(\zeta(0) = 0\), and (ii)
strict monotonicity: \(\zeta\) is strictly increasing. We denote a continuous function \(\eta: \mathbb{R}_+ \times \mathbb{R}_+ \rightarrow \mathbb{R}_+\) to be of class \(\mathcal{K}\mathcal{L}\) if it satisfies the following two properties: (i) For each fixed \(t\), the function \(\eta(\cdot, t)\) is of class \(\mathcal{K}\), and (ii) For each fixed \(s\), the function \(\eta(s, \cdot)\) is decreasing and \(\lim_{t \to \infty} \eta(s, t) = 0\).


\section{Stochastic Quantum Control and Estimation}
Stochastic quantum dynamics encompass the complex interactions between quantum systems and classical control mechanisms. This section delves into the mathematical frameworks and problem statements essential for controlling and estimating the states of such quantum systems. 
\subsection{QUANTUM STOCHASTIC EVOLUTION}
The stochastic quantum dynamics can be accurately described by well-structured stochastic differential equations known as Stochastic Master Equations (SME). These equations define the relationships between the classical control input $\Omega(t)$, tuned by the controller, and the classical output $J(t)$ corresponding to instantaneous value of the observed measurements. For such a setup, as indicated in Fig. \ref{setup}, the diffusive quantum stochastic master equation, interpreted in the context of Itô calculus, is expressed as \cite{rouchon2022tutorial}
\begin{align}\label{smeftp}
d\rho \left( t \right) &=  -i\left[ {H_d+\Omega(t)H_c,\rho \left( t \right)} \right]dt + \Gamma\mathcal{D}\left[ {{c_d}} \right]\rho \left( t \right)dt \notag \\
 &\!\!\!\!\!\!\!\!\!\quad +M\mathcal{D}\left[ {{c_m}} \right]\rho \left( t \right)dt + \sqrt {\eta M}\mathcal{H}\left[ {{c_m}} \right]\rho \left( t \right)dW\left( t \right),
\end{align}
where the quantum density operator $\rho \in {\mathbb{H}}_\rho $.
The unitary aspects of the dynamics \eqref{smeftp} is characterized by the quantum mechanical system Hamiltonian $H(t)$, composed of drift ${H}_d$ and control ${H}_c$ terms. The dissipative part of the dynamics is described  
through arbitrary Lindblad operators $c_d$ and $c_m$, corresponding to decoherence and measurement channels. The superoperators $\mathcal{D}\left[ {{c}} \right]\rho$ corresponding to dissipation, and $\mathcal{H}\left[ {{c}} \right]\rho$ corresponding to measurement for a generic operator $c$ are defined as follows
\begin{equation*}
    \begin{aligned}
        &\mathcal{D}[c]\rho=c\rho c^\dagger-\frac{1}{2}(c^\dagger c\rho+\rho c^\dagger c),\\
        &\mathcal{H}[c]\rho=c{{\rho }}+{{\rho }}{{{c}}^{\dagger }}-{{\left\langle c+{{{c}}^{\dagger }} \right\rangle }}{{\rho }}.
    \end{aligned}
\end{equation*}
The coefficient $\Gamma>0$ indicates the decay rate, $M>0$ is the interaction strength corresponding to the measurement process, and $\eta \in [0, 1]$ indicates the detector efficiency with $\eta=1$ implying perfect detection and $\eta=0$ lack of information about the quantum state. The differential $dW(t)$ denotes an infinitesimal Wiener increment satisfying Itô's lemma with $\mathbb{E}[dW(t)]=0$, $\mathbb{E}[dW^2(t)]=dt$. The same Wiener process is shared by the outcome of classical measurement homodyne current
\begin{equation}\label{J}
J(t) = \sqrt{\eta M}{{\left\langle c_m+{{{c_m^{\dagger }}}} \right\rangle }}+\xi(t),
\end{equation}
where $\xi(t) = dW(t)/dt$, follows Gaussian white noise properties with $ \mathbb{E}[\xi(t)\xi(t')] = \delta(t-t')$.

\subsection{Problem Statement}
This subsection outlines the primary research objectives pursued in this paper. The first objective is to develop an observer capable of estimating the evolution of the density operator $\rho(t)$ from a sequence of homodyne current measurements. The second objective is to design a controller that steers the system's state $\rho(t)$ towards a predetermined target state within a specified orbit. These objectives aim to advance the control of stochastic quantum dynamics, providing robust solutions to challenges in quantum state estimation and stabilization.

\textbf{Problem 1:} Given the quantum system presented in \eqref{smeftp}-\eqref{J}, the first problem addressed in the paper is to develop an observer to directly estimate the evolution of the density operator $\rho(t)$ from the continuous sequence of homodyne current measurements. 
 In addition, we consider that the drift Hamiltonian $H_d$ contains \emph{unknown} constant parameters. \\

Prior to expressing the second problem and delving into the control design, we introduce the following stability definitions pertinent to the stochastic quantum system problem at hand, drawing upon the frameworks in \cite{ito2015stability, deng2001stabilization}.
\begin{definition}\label{def:ApS} (\textit{Asymptotically Practically Stable in Probability}): 
Let $\chi$ be a closed and invariant set under the dynamics governed by the stochastic differential equation in Itô form
\[
d\rho(t) = f(\rho)dt + g(\rho)\sigma(t)dW(t),
\]
where $\rho \in {\mathbb{H}}_\rho $, $\sigma(t)$ is the intensity of noise at time $t$, and $W(t)$ denotes 
a 1-dimensional standard Wiener process. The set $\chi$ is \textit{asymptotically practically stable in probability} if, for every $\varepsilon > 0$, there exists a function $\beta \in \mathcal{K}\mathcal{L}$ and a non-negative real number $b \geq 0$ such that for every initial state $\rho(0) \in \mathbb{H}_\rho$, the following condition holds:
\[
\mathbb{P}\left\{ d(\rho(t), \chi) < \beta\left(d(\rho(0), \chi), t\right) + b \right\} \geq 1 - \varepsilon, \quad \forall t \geq 0
\]
where $d(\rho, \chi)$ denotes a distance from state $\rho$ to the set $\chi$, measured in an appropriate metric. If $b = 0$, $\chi$ is \textit{asymptotically stable in probability}. If $\beta(r, t) = c r e^{-\lambda t}$ for some $c > 0$ and $\lambda > 0$, $\chi$ is \textit{exponentially stable in probability}.
\end{definition}

\begin{definition} \label{def:NSpS} (\textit{Noise-to-State Practically Stable in Probability}):
Adopting the conditions from Definition \ref{def:ApS}, the set $\chi$ is \textit{noise-to-state practically stable} (NSpS) in probability if, for every $\varepsilon > 0$, there exists a function $\beta \in \mathcal{K}\mathcal{L}$, a function $\gamma \in \mathcal{K}$, and a non-negative real number $b \geq 0$, ensuring that for every initial density operator $\rho(0) \in \mathbb{H}_\rho$, the following holds with probability at least $1 - \varepsilon$, for every $t \geq 0$
\begin{equation*}
\mathbb{P}\Big\{ d(\rho(t), \chi) \! <\beta\left(d(\rho(0), \chi), t\right) +
\gamma(\!\!\!\sup_{\tau \in [0, t]} \! \! \sigma^2(\tau))\! \!+ b \Big\} \! \geq \! 1 - \varepsilon 
\end{equation*}
If $b = 0$, $\chi$ is \textit{noise-to-state stable} (NSS) in probability. 
\end{definition}
Definition \ref{def:NSpS} is introduced to measure the resilience of a quantum system's stability in the face of noise. It ensures that, probabilistically, the system's state stays within a practical vicinity of the invariant set $\chi$, notwithstanding the influence of quantum noise. The extent of this vicinity is contingent upon the intensity of the noise, as characterized by the variance
of the 1-dimensional standard Wiener process $W(t)$.

\textbf{Problem 2:} 
The second problem is to design a controller that commands the signal $\Omega(t)$ with the aim of steering the system's state $\rho(t)$ from an arbitrary initial state $\rho_0\in\mathbb{H}_\rho$, towards a stationary point within a predetermined target set $\chi_f$. The set $\chi_f$ is characterized by all the elements $\rho_f\in \mathbb{H}_\rho$ that 
commutes with the drift Hamiltonian ${H}_d$, that is, $[ \rho_f, {H}_d ] = 0$. 
This commutation condition ensures that the target states $\rho_f$ remain invariant under the evolution governed by $H_d$. 



\section{Estimation Techniques in Stochastic Quantum Dynamics}
This section introduces and elaborates on three estimation techniques that can play a crucial role in quantum state estimation: Quantum Filter Estimator, Quantum Kalman Filter Estimator, and  the Multiple-Model Adaptive Estimator to address the uncertainty on the drift Hamiltonian $H_d$. 

In the following, we leverage the fact that 
a general mixed state of a qubit can be expressed as a linear combination of the Pauli matrices, which, along with the identity matrix, form a basis for the $2\times 2$ self-adjoint matrices, that is,
\begin{equation}\label{rcv}
\rho(t) = \frac{1}{2} I + \frac{1}{2} \sum_{k=1}^{3} x_k(t) \sigma_k,
\end{equation}
where the real-valued functions $x_k(t)=\left\langle {{\sigma_k}} \right\rangle$ 
are the coordinates of a point within the unit ball and are taken to be components of a so-called coherence vector
$\mathbf{x}(t) = [x_1(t), x_2(t), x_{3}(t)]^T \in \mathbb{R}^{3}$.
By rewriting \eqref{smeftp} one attains the following stochastic system dynamics for \(\mathbf{x}(t)\) in Itô form
\begin{equation}\label{dx}
d\mathbf{x}(t) = \mathbf{f}(\mathbf{x},\Omega(t)) \, dt + \mathbf{g}(\mathbf{x}) \, dW(t),   
\end{equation}
where  
$\mathbf{f}(\mathbf{x},\Omega)= [f_1(\mathbf{x},\Omega),  f_2(\mathbf{x},\Omega), f_{3}(\mathbf{x},\Omega)]^T$
and 
$\mathbf{g}(\mathbf{x})= [g_1(\mathbf{x}),  g_2(\mathbf{x}), g_{3}(\mathbf{x})]^T$, whose elements for $k=1,2,3$ are
\begin{align*}
\mathbf{f}_k(\mathbf{x}, \Omega) &= -i\Tr{[H_d+\Omega(t)H_c, \rho(t)]\sigma_k} \\
&\quad \qquad + \Tr{\left(\Gamma \mathcal{D}[c_d] + M \mathcal{D}[c_m]\right)\rho(t) \sigma_k}, \\
\mathbf{g}_k(\mathbf{x}) &= \sqrt{\eta M}\Tr{ \mathcal{H}[c_m]\rho(t) \sigma_k}.
\end{align*}
with $\rho(t)$ substituted by \eqref{rcv}.
By accordingly, rewriting the output equation \eqref{J}, one attains 
\begin{equation}\label{Output}
dy(t)= h(\mathbf{x}(t))dt + dV(t), 
\end{equation}
where the term $dV(t) = dW(t) + dZ(t)$ includes
an additional infinitesimal Wiener increment with $\mathbb{E}[dZ(t)]=0$, $\mathbb{E}[dZ^2(t)]=\sigma_z^2 dt$ 
corresponding to the measurement noise associated with the spectrum analyzer, and the term $h(\mathbf{x}(t))$ is computed as
\[
h(\mathbf{x}) = \frac{\sqrt{\eta M}}{2}\Big( \Tr{c_m  + c_m^\dagger} + \sum_{k=1}^3 x_k(t) \Tr{ (c_m  + c_m^\dagger) \sigma_k } \Big)\cdot
\]

 \subsection{Quantum Filter Estimator}
The Quantum Filter Estimator is a fundamental tool in the estimation of quantum states. It provides a systematic approach to updating the state of a quantum system based on continuous measurement results. This estimator relies on the principles of quantum measurement theory and stochastic calculus to produce an estimate of the quantum state in real-time.

The quantum stochastic dynamics \eqref{smeftp} is inherently a filtering equation, \cite{bouten2007introduction}. Following this approach, but for the coherence vector dynamics in \eqref{dx} results that the estimated state \(\hat{\mathbf{x}}_{QF}(t)\) can be predicted through
\begin{equation}\label{xq}
d\hat{\mathbf{x}}_{QF}(t) = \mathbf{f}( \hat{\mathbf{x}}_{QF}(t),\Omega(t)) \, dt + \mathbf{g}(\hat{\mathbf{x}}_{QF}(t)) \, (dy(t)-h({\hat{\mathbf{x}}_{QF}})dt).
\end{equation}
Here, \(\hat{\mathbf{x}}_{QF}(t)\) represents the estimated state vector of quantum filtering equation at time \(t\), \(\mathbf{f}(t, \hat{\mathbf{x}}_{QF}(t))\) is the drift term, and \(\mathbf{g}(\hat{\mathbf{x}}_{QF}(t))\) is the diffusion term influenced by the Wiener increment \(dW(t)\), that in \eqref{xq} is replaced by the innovation term $dy(t)-h({\hat{\mathbf{x}}_{QF}})dt$.

\subsection{Quantum Kalman Filter estimator}
The Quantum Kalman Filter Estimator extends the classical Kalman filter to the quantum domain. It is particularly useful for systems where the noise and disturbances can be modeled as Gaussian processes. By leveraging the properties of the Kalman filter, this estimator offers a robust and efficient means of tracking the evolution of quantum states under noisy conditions, yielding high-quality estimates in the presence of measurement and process noise while also providing the associated confidence or uncertainty levels.
It is however important to stress that in this setup one have first to deal with following points: (i) the stochastic
system dynamics for the coherence vector in \eqref{dx} is nonlinear; (ii) the noise in the output equation \eqref{Output} is correlated with process noise in \eqref{dx} by noting that $dV$ includes $dW$; and (iii) the coherence vector space for a qubit is a ball with radius 1, known as the Bloch ball, thereby satisfying $\|\mathbf{x}\|^2 \leq 1$.
To address (i), we apply an Extended Kalman Filter (EKF) based approach by linearizing the non-linear functions around the current estimate using a first-order Taylor expansion.
For (ii), we  make use of a de-correlating approach, 
see \cite{simon2006optimal}, but tailored for our case, such that the new transformed equivalent system satisfies the property of uncorrelated process and measurement noises, and therefore one can then make use of the standard EKF equations.
To address (iii), we integrate the structural properties of a valid quantum state into the observer equations by constraining the state estimate $\hat{\mathbf{x}}_{KF}(t)$ at the update equation to always belong to the Bloch ball.
To this end, we define the following operator acting on the coherence vector $\mathbf{x}$ and a real vector $\mathbf{a} \in \mathbb{R}^{3}$ as
\begin{equation}\label{oplus}
\mathbf{x} \oplus \mathbf{a}= {\Proj\left( {\mathbf{x}}+{\mathbf{a}} \right)},
\end{equation}
where the operator $\Proj(\mathbf{x})= 1/\max\{1,\|\mathbf{x}\|\}\mathbf{x}$ projects $\mathbf{x}$ onto the closed ball. The operator \eqref{oplus} is pivotal within our proposed filtering algorithm, ensuring that the state and consequently the estimate of the density matrix adheres to the necessary properties of a density operator, and maintains the physical relevance of our estimated states.

With the established setup, we proceed to construct a quantum Kalman filter estimator. In this framework, we treat the measurement noise $dW(t)$ and the noise of spectrum analyzer $dZ(t)$ uncorrelated, that is $\mathbb{E}[dW(t)\,dZ(t)] = 0$.
Then, its correlation with process noise is given by $\mathbb{E}[dW(t)\, dV(t)] = dt$ since we have been considered $\mathbb{E}[dW^2(t)]=dt$. Note also that the variance $\mathbb{E}[dV^2(t)]$ is defined as $\mathbb{E}[dV^2(t)] =\sigma_{v}^2dt$, with $\sigma_{v}^2= (1 + \sigma_z^2)$. Now, we establish the following result.

\begin{prop}\label{prop1}
Consider the SDE
\begin{equation}\label{eq:bar_x}
 d\bar {\mathbf{x}}=\bar{\mathbf{f}}(\bar{\mathbf{x}}, \Omega, dy)dt
+ \mathbf{g}(\bar{\mathbf{x}})d\bar W(t),
\end{equation}
where 
$$
\bar{\mathbf{f}}(\bar {\mathbf{x}}, \Omega, dy)dt= \mathbf{f} (\bar {\mathbf{x}}, \Omega)dt  - \bar\sigma \mathbf{g}(\bar {\mathbf{x}}) h(\bar {\mathbf{x}})dt+ \bar\sigma \mathbf{g}(\bar {\mathbf{x}}) dy,
$$
$\bar\sigma = 1/(1 + \sigma_z^2)$,
and
$d\bar W = dW - \bar\sigma dV$. Then, the process and measurement noises associated with the system formed by \eqref{eq:bar_x} and \eqref{Output}  are uncorrelated, that is, $\mathbb{E}[d\bar W dV] =0$. Moreover, \eqref{eq:bar_x} is equivalent to \eqref{dx} in the sense that if both SDEs start with the same initial condition and all the exogenous signals have the same trajectory, then $\mathbf{x}(t) = \bar {\mathbf{x}}(t)$, for every $t\ge t_0$.
\end{prop}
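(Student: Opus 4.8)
The plan is to treat the two assertions separately: the decorrelation claim follows from a one-line second-moment computation, while the equivalence claim follows by substituting the definitions of $dy$ and $d\bar W$ into \eqref{eq:bar_x} and invoking pathwise uniqueness of the resulting SDE.

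First I would establish $\mathbb{E}[d\bar W\, dV]=0$. Using $d\bar W = dW - \bar\sigma\, dV$ and the linearity of the (formal) Itô product, I compute
\[
\mathbb{E}[d\bar W\, dV] = \mathbb{E}[dW\, dV] - \bar\sigma\, \mathbb{E}[dV^2].
\]
The cross-term is $\mathbb{E}[dW\,dV] = \mathbb{E}[dW(dW+dZ)] = \mathbb{E}[dW^2] + \mathbb{E}[dW\, dZ] = dt$, since $dW$ and $dZ$ are taken uncorrelated, and $\mathbb{E}[dV^2] = \sigma_v^2\, dt = (1+\sigma_z^2)\,dt$. Substituting $\bar\sigma = 1/(1+\sigma_z^2)$ yields $\mathbb{E}[d\bar W\, dV] = dt - \frac{1}{1+\sigma_z^2}(1+\sigma_z^2)\,dt = 0$. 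This is exactly the regression (de-correlating) coefficient $\bar\sigma = \mathbb{E}[dW\,dV]/\mathbb{E}[dV^2]$ at work, and it requires no properties of $d\bar W$ beyond its definition.

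For the equivalence, I would substitute the actual measurement $dy = h(\mathbf{x})\,dt + dV$ together with $d\bar W = dW - \bar\sigma\, dV$ into \eqref{eq:bar_x}. The two terms $+\bar\sigma\,\mathbf{g}(\bar{\mathbf{x}})\,dV$ (arising from $\bar\sigma \mathbf{g}(\bar{\mathbf{x}})\,dy$) and $-\bar\sigma\,\mathbf{g}(\bar{\mathbf{x}})\,dV$ (arising from $\mathbf{g}(\bar{\mathbf{x}})\,d\bar W$) cancel identically, leaving
\[
d\bar{\mathbf{x}} = \mathbf{f}(\bar{\mathbf{x}},\Omega)\,dt + \bar\sigma\,\mathbf{g}(\bar{\mathbf{x}})\big(h(\mathbf{x})-h(\bar{\mathbf{x}})\big)\,dt + \mathbf{g}(\bar{\mathbf{x}})\,dW.
\]
Comparing with \eqref{dx}, the two processes are driven by the same $dW$, start from the same initial condition, and their coefficients differ only by the term $\bar\sigma\,\mathbf{g}(\bar{\mathbf{x}})(h(\mathbf{x})-h(\bar{\mathbf{x}}))$, which vanishes whenever $\bar{\mathbf{x}}=\mathbf{x}$. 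I would then define the error $\mathbf{e}=\bar{\mathbf{x}}-\mathbf{x}$, subtract \eqref{dx} from the above, and observe that $\mathbf{e}$ solves an SDE with $\mathbf{e}(t_0)=0$ whose drift and diffusion vanish at $\mathbf{e}=0$. Since $\mathbf{f}$, $\mathbf{g}$ and $h$ are polynomial in the coherence-vector entries and hence locally Lipschitz on the (bounded) Bloch ball, pathwise uniqueness forces $\mathbf{e}\equiv 0$, i.e. $\bar{\mathbf{x}}(t)=\mathbf{x}(t)$ for all $t\ge t_0$.

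The step requiring the most care is the equivalence argument, not its algebra. The $dV$-cancellation is purely algebraic because $\mathbf{g}$ multiplies the increments linearly, so no extra Itô correction terms arise; the delicate point is avoiding circularity, since the injected $dy$ already depends on the true $\mathbf{x}$. Framing the conclusion through the error SDE and pathwise uniqueness is what makes the substitution rigorous. I would also note in passing that $d\bar W = (1-\bar\sigma)\,dW - \bar\sigma\, dZ$ has variance $\tfrac{\sigma_z^2}{1+\sigma_z^2}\,dt \neq dt$, so $d\bar W$ is a rescaled (non-unit) Wiener increment; this does not affect the proposition but must be accounted for when the standard EKF gain is later applied to \eqref{eq:bar_x}.
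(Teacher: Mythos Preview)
Your proof is correct and follows the same decorrelating approach as the paper. The only cosmetic difference is direction: the paper adds the null term $\bar\sigma\,\mathbf{g}(\mathbf{x})\bigl(dy - h(\mathbf{x})\,dt - dV\bigr)$ to \eqref{dx} and observes the result is \eqref{eq:bar_x}, thereby avoiding the $h(\mathbf{x})-h(\bar{\mathbf{x}})$ residual that you instead eliminate via the error SDE and pathwise uniqueness---but both arguments rest on the same algebraic identity and on (implicit, in the paper's version) uniqueness of solutions.
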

\begin{proof}
To show uncorrelated noises, we compute 
$$
\mathbb{E}[d\bar W(t) dV(t)] = \mathbb{E}[dW(t)dV(t)]-\bar\sigma\mathbb{E}[dV^2(t)] = 0.
$$
To show the equivalence of both SDEs, we first start to note using \eqref{Output} that the term
$\bar\sigma \mathbf{g}(\mathbf{x}) \big(dy- h(\mathbf{x})dt - dV\big)$ is zero. If we now add this null term to the right-hand-side of \eqref{dx}, and replace the variable $\mathbf{x}$ by $\bar{\mathbf{x}}$, we obtain precisely \eqref{eq:bar_x}.
\end{proof}

We can now propose our state estimator that continuously updates its state estimate $\hat{\mathbf{x}}_{KF}(t)$ as new measurements are acquired
through
the propagation and update steps for the state and covariance matrix. In the propagation step 
the current estimate of
the state is propagated through the deterministic system
dynamics
\begin{equation}\label{Predictx}
    d\hat{\mathbf{x}}^-_{KF}(t) = \mathbf{\bar f}( \hat{\mathbf{x}}_{KF}(t),\Omega(t),y(t)) \, dt, 
\end{equation}
producing an a priori state estimate $\hat{\mathbf{x}}^-_{KF}(t)$. Similarly, an a priori estimation of the prediction error
covariance matrix $P_{3\times 3}$ is computed as
\begin{equation}\label{Predictp}
    d{P}^-\left( t \right)=(A(t)P(t)+P(t){{A}^{\dagger}}(t)+G(t)\sigma_{\bar w}^2(t){{G}^{\dagger}}(t))dt
\end{equation}
for which the initial guess $P(0) = P_0$ 
encodes the confidence in the estimate of the initial condition
$\hat{\mathbf{x}}_{KF}(0)$, and $\sigma_{\bar w}^2 = 1+\bar\sigma$.

Note that in \eqref{Predictp}, to evaluate the propagation of $P$, we need to compute the matrices $A(t) =\frac{\partial \bar {\mathbf{f}}(\mathbf{x}, \Omega)}{\partial \mathbf{x}}$ and $G(t)=\frac{\partial \mathbf{g}(\mathbf{x})}{\partial \mathbf{x}}$.
It turns out that only the term $\mathbf{g}(\mathbf{x})$ introduces a nonlinear term (in fact quadratic) in $\mathbf{x}$. All the other ones are linear with $\mathbf{x}$, and therefore the linearizations for these terms are not approximations. This is not surprising since the Lindblad equation can be transformed into an inhomogeneous linear vector equation using the coherence vector formulation, \cite{alicki2007n}.

Next, the update step integrates the most recent measurement by updating the prediction equations \eqref{Predictx} and \eqref{Predictp} to obtain the a posteriori estimates 
\begin{equation*}
    \begin{aligned}
         \hat{\mathbf{x}}_{KF}(t)&= \hat{\mathbf{x}}_{KF}^-(t)\oplus  K(t)\left(dy\left( t \right)-h(\mathbf{x}(t))dt
         \right),\\
 P(t)&=P^-(t)-P^-(t)C^\dag\sigma_v^{-2}
  CP^-(t).
    \end{aligned}
\end{equation*}
Here, the Kalman gain is computed as
$K(t) = P(t)C^{\dagger}\sigma_v^{-2}$,   
where $C= \big[c_1, c_2, c_3\big]$, with 
$c_k= \frac{\sqrt{\eta M}}{2}\Tr{\!(c_m  + c_m^\dagger) \sigma_k\!}$, $k=1,2,3$.

By estimating the components of the Bloch vector, we can then reconstruct an estimate for the density matrix as follows:
\begin{equation}\label{est_rcv}
\hat\rho(t) = \frac{1}{2} I + \frac{1}{2} \sum_{k=1}^{3} \hat x_{k KF}(t) \sigma_k.
\end{equation}

We can now state the following result.
\begin{prop}
Consider the proposed Kalman-based filter 
with initial condition $\hat\rho_0\in\mathbb{H}_\rho$, which implies an adequate initial condition $\hat{\mathbf{x}}_{KF}(0)$. 
Then, the state estimate $\hat\rho(t)$ 
leaves in $\mathbb{H}_\rho$ 
for every $t\ge0$. 
\end{prop}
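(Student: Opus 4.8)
The plan is to verify directly the three defining properties of $\mathbb{H}_\rho$ for the reconstructed operator $\hat\rho(t)$ in \eqref{est_rcv}, treating self-adjointness and unit trace as essentially structural and concentrating the real work on positivity. For the first two, I would note that the filter propagates a \emph{real} coherence vector: the drift in \eqref{Predictx}, the covariance flow \eqref{Predictp}, the gain $K$, and the matrix $C$ all have real entries, so realness of $\hat{\mathbf{x}}_{KF}(t)$ is preserved from the real initial data $\hat{\mathbf{x}}_{KF}(0)$ implied by $\hat\rho_0\in\mathbb{H}_\rho$. Since each $\sigma_k$ is Hermitian, the combination $\tfrac12 I + \tfrac12\sum_{k}\hat x_{kKF}(t)\sigma_k$ then equals its own conjugate transpose, so $\hat\rho=\hat\rho^\dagger$; and using $\Tr{I}=2$ together with $\Tr{\sigma_k}=0$ gives $\Tr{\hat\rho(t)}=1$ for every $t$, independently of the value taken by $\hat{\mathbf{x}}_{KF}(t)$.

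The substantive property is positive semidefiniteness, which I would first reduce to a norm bound on the coherence vector. For a qubit, $\hat\rho=\tfrac12\big(I+\sum_k \hat x_{kKF}\sigma_k\big)$ has the two eigenvalues $\tfrac12\big(1\pm\|\hat{\mathbf{x}}_{KF}(t)\|\big)$, so $\hat\rho(t)\ge 0$ holds if and only if $\|\hat{\mathbf{x}}_{KF}(t)\|\le 1$, i.e. the estimate lies in the closed Bloch ball. The entire claim therefore rests on maintaining this norm bound for all $t\ge 0$.

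Here the construction is designed to deliver the bound. The a posteriori estimate in the update step is produced through the operator $\oplus$ of \eqref{oplus}, namely $\hat{\mathbf{x}}_{KF}(t)=\hat{\mathbf{x}}^-_{KF}(t)\oplus K(t)\big(dy(t)-h(\mathbf{x})dt\big)=\Proj\big(\hat{\mathbf{x}}^-_{KF}(t)+K(t)(\cdots)\big)$. Since $\Proj(\mathbf{v})=\mathbf{v}/\max\{1,\|\mathbf{v}\|\}$ satisfies $\|\Proj(\mathbf{v})\|\le 1$ for every $\mathbf{v}\in\mathbb{R}^3$, each updated estimate automatically obeys $\|\hat{\mathbf{x}}_{KF}(t)\|\le 1$, regardless of how far the predicted value $\hat{\mathbf{x}}^-_{KF}(t)$ or the innovation push it. Because \eqref{est_rcv} is assembled from the a posteriori estimate, positivity of $\hat\rho(t)$ follows at every update, and this is precisely the sense in which the physical validity is guaranteed \emph{by construction} rather than merely checked after the fact.

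The step I expect to be the main obstacle is reconciling the continuous-time propagation \eqref{Predictx} with the projected update, since between updates the deterministic flow $\bar{\mathbf{f}}$ could in principle carry $\hat{\mathbf{x}}^-_{KF}$ outside the ball. The clean route is to split $\bar{\mathbf{f}}$ into its dissipative/unitary part and its measurement-feedback part: the former is the coherence-vector image of the Lindblad generator, which renders the closed ball positively invariant (a Nagumo-type tangency argument at $\|\mathbf{x}\|=1$, where the decay term $\Gamma\mathcal{D}[c_d]$ and $M\mathcal{D}[c_m]$ point strictly inward), so that only the feedback term can produce a transient excursion, and that excursion is immediately annihilated by $\Proj$. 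I would close by emphasizing that combining this invariance with the a posteriori reconstruction \eqref{est_rcv} yields $\hat\rho(t)\in\mathbb{H}_\rho$ for all $t\ge 0$, in contrast to Kalman-type filters that leave positivity unenforced.
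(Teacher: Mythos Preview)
Your proof follows the same route as the paper's: unit trace via $\Tr(\sigma_k)=0$, positive semidefiniteness via the equivalence $\hat\rho\ge 0 \Leftrightarrow \|\hat{\mathbf{x}}_{KF}\|\le 1$, and the latter enforced by the projection $\Proj$ in the update step \eqref{oplus}. Where you differ slightly is in the treatment of the propagation step. The paper invokes Proposition~\ref{prop1} to identify \eqref{Predictx} with the deterministic part of \eqref{dx}, i.e.\ Lindblad flow, and takes ball-invariance of that flow as given; you instead split $\bar{\mathbf{f}}$ into its Lindblad and measurement-feedback pieces, concede the latter may push the a priori estimate $\hat{\mathbf{x}}^-_{KF}$ outside the ball, and rely on $\Proj$ to restore it. Your reading is arguably the more careful one, since $\bar{\mathbf{f}}$ is not literally $\mathbf{f}$ (it contains the $\bar\sigma\,\mathbf{g}\,(dy-h\,dt)$ term), so the appeal to Proposition~\ref{prop1} for the drift alone is loose. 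But since $\hat\rho(t)$ in \eqref{est_rcv} is built from the \emph{a posteriori} estimate, the projection by itself already secures the conclusion, and both arguments land in the same place. The paper also omits the explicit self-adjointness check you supply.
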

\begin{proof}
In the propagation step, the coherence vector $\hat{\mathbf{x}}_{KF}(t)$ is the solution of the  differential equation \eqref{Predictx}, which from Proposition \ref{prop1} is equivalent to the deterministic part of \eqref{dx}. 
In the measurement step, we make use of the operator \eqref{oplus}, which guarantees $\|\hat{\mathbf{x}}_{KF}(t)\|^2 \le 1$, which implies 
positive semi-definiteness of $\hat\rho(t)$; and norm preservation since $\Tr{\frac{I}{2}+ \sum_{k} \hat{\mathbf{x}}_{k KF}(t) \sigma_k}= 1$ using the fact that $\Tr(\sigma_k)=0$.
\end{proof}

\subsection{Multiple-model adaptive  estimator}
This section presents a class of Multiple Model Adaptive Estimators (MMAEs) designed to address the challenges of quantum systems with parametric model varying dynamics or uncertainties, more precisely, in the quantum-mechanical Hamiltonian. The MMAE operates with a finite number \( N \) of selected models from the original (potentially infinite) set of system models and consists of two main components: i) a dynamic generator for \( N \) weighting signals $p_\ell(t)$, $\ell=1, \ldots, N$, and ii) a bank of \( N \) observers, each designed based on one of the selected models. In our case, the observers are precisely either the Quantum estimator or the Extended Kalman filter presented in previous subsections designed to a specific value of $H_d$.
The state estimate is produced by a weighted sum of the local state estimates $\mathbf{\hat x}_\ell(t)$ generated by the observer bank, i.e., 
\begin{equation*}
\mathbf{\hat x}(t) = \sum_{\ell=1}^N p_\ell(t) \mathbf{\hat x}_\ell(t).
\end{equation*}
By employing a set of parallel filters, each corresponding to a different model of the system's dynamics, the MMAE adapts to the most likely model based on measurement data. This provides a flexible and accurate estimation method capable of handling sudden changes in the system's behavior or parameters.

The evolution of each dynamic weight $p_\ell(t)$ is set to be piecewise constant, updating its value at instants of time $t = t_k$, according to following equation, \cite{hassani2009multiple}: 
\begin{equation} \label{eq:p}
p_\ell(t_{i+1}) = \frac{\beta_\ell e^{-w_\ell(t_k)}}{\sum_{j=1}^N p_j(t_k) \beta_j e^{-w_j(t_k)}} p_\ell(t_k), 
\end{equation}
In \eqref{eq:p}, \( \beta_\ell \) is a positive weighting constant value and \( w_\ell(t_k) \) is a scalar continuous function called an error measuring function that maps the measurable signals of the quantum system and the states of the \(\ell^{th}\) local observer to a nonnegative real value. This error measuring function evaluates the discrepancy between the observed measurements and the predicted state from the \(\ell^{th}\) model. An example is \( w_\ell(t_k) = \|y_\ell(t_k)-h(\mathbf{x}_\ell(t_k)\|^2 \).
It can be proved (using similar arguments as in \cite{hassani2009multiple}) that due to the particular structure of equation \eqref{eq:p}, the weights $p_\ell(t)$ are
positive, bounded, and the overall sum $\sum_{\ell=1}^N p_\ell(t)$ is always one for all $t\ge 0$, independently of the error measuring function $w_\ell(t)$.
In this paper, we implemented the MMAE method for both Quantum Filter (QF) and Quantum Kalman Filter (KF). The obtained results are presented in Section \ref{sec:Sim}.

\section{Lyapunov-based state feedback control}
Drawing inspiration from the concepts presented in \cite{d2021introduction}, we begin by considering a state $\rho_f\in \chi_f$, and 
select a matrix $\Pi =I-\rho_f$.
This selection leads us to define the Lyapunov function as
\begin{equation}\label{Lyapunov}
V(\rho)=\Tr{\Pi\rho}.
\end{equation}
It is important to note that $\Pi$ is Hermitian, ensuring that it possesses real eigenvalues. Furthermore, $\Pi$ is positive semi-definite, which guarantees that the Lyapunov function $V(\rho)$ is non-negative for all states $\rho$, i.e., $V({\rho}(t)) \geq 0$. 
Additionally, a significant characteristic of $\Pi$ is its commutativity with the final state $\rho_f$, that is, $[\rho_f,\Pi]=0$. 
\begin{remark}\label{stationary}
Stationary points of the Lyapunov function $V(\rho)$ are identified as states $\bar{\rho}$ that commute with $\rho_f$, see \cite{d2021introduction}, i.e., $[\bar{\rho}, \rho_f] = 0$. Given the definition of $\Pi = I - \rho_f$, this commutation property implies $[\bar{\rho}, \Pi] = 0$. 
\end{remark}

We initially derive a controller to minimize $V(\rho)$, assuming access to the actual state $\rho(t)$. Subsequently, we adapt this approach for situations lacking direct access to $\rho(t)$, employing an estimated state $\hat\rho(t)$.

\begin{thm}
Consider the quantum system \eqref{smeftp} and let $\chi_f=\{\rho\in\mathbb{H}: [\rho,\Pi]=0\}$ be the desired stationary set. Consider the feedback control 
\begin{equation}\label{controller}
\Omega(t) =
\begin{cases}
-i\frac{\Upsilon_0}{\Upsilon_1} & \text{   for } t \in [t_{k-1}, t_k),\ k = 1, 3, 5, \ldots \\
\quad 0 & \text{    otherwise}
\end{cases}
\end{equation}
where ${{\Upsilon }_{0}}=\Gamma \Tr{\Pi \mathcal{D}\left[ {c_d} \right]{{\rho }}} 
+ M \Tr{\Pi \mathcal{D}\left[ {c_m} \right]{{\rho }}}+\alpha \Tr{\Pi{{\rho }}}$, with $\alpha\in \mathbb{R}^+$, and ${{\Upsilon }_{1}}=\Tr{\Pi \left[ {{{H_c}}},\rho \right]}$. The control value switches at event times $\!\{t_0, t_1, \ldots \!\}$ when $\Upsilon$ crosses an $\varepsilon>0$ value and the period of time between the last switching and current is larger than some dwell time, that is,  
$t_0$ the initial time and $t_k$ is defined as
\begin{equation*}
\begin{aligned}
\left\{
\begin{array}{ll}
t_k = \min\limits_t \{ t \ge t_{k-1} + \Delta_1 : \Upsilon_{1} \le \varepsilon \} & \text{for } k \text{ odd}, \\
t_k = \min\limits_t \{ t \ge t_{k-1} + \Delta_2 : \Upsilon_{1} \ge \varepsilon \} & \text{for } k \text{ even},
\end{array}
\right.
\end{aligned}
\end{equation*}
for some appropriate dwell times
$\Delta_1,\, \Delta_2\in \mathbb{R}^+$. 
Then, the resulting closed-loop system exhibits the property that the set $\chi_f$ is practically exponentially stable in probability.
\end{thm}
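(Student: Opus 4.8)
The plan is to treat $V(\rho)=\Tr{\Pi\rho}$ as a stochastic Lyapunov function and to control its infinitesimal generator along the switched closed loop. Since $V$ is \emph{linear} in $\rho$, applying It\^o's formula introduces no second-order correction, so $dV=\Tr{\Pi\,d\rho}$. Substituting the SME \eqref{smeftp} splits $dV$ into a drift part $\mathcal{A}V\,dt$ and a zero-mean martingale part $\sqrt{\eta M}\,\Tr{\Pi\mathcal{H}[c_m]\rho}\,dW$. The first key simplification exploits the target-set structure: because every $\rho_f\in\chi_f$ satisfies $[\rho_f,H_d]=0$ and $\Pi=I-\rho_f$, we have $[\Pi,H_d]=0$, and the cyclic property of the trace gives $\Tr{\Pi[H_d,\rho]}=\Tr{[\Pi,H_d]\rho}=0$. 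Hence the uncontrolled unitary drift drops out of $\mathcal{A}V$ entirely.

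The second step inserts the feedback \eqref{controller}. On the active intervals $\Omega=-i\Upsilon_0/\Upsilon_1$, and since $\Upsilon_1=\Tr{\Pi[H_c,\rho]}$ is purely imaginary (trace of a Hermitian times an anti-Hermitian operator) while $\Upsilon_0$ is real, $\Omega$ is indeed real-valued. A direct computation gives $-i\,\Omega\,\Upsilon_1=-\Upsilon_0$, so the dissipative contributions $\Gamma\Tr{\Pi\mathcal{D}[c_d]\rho}+M\Tr{\Pi\mathcal{D}[c_m]\rho}$ cancel exactly against the matching terms inside $\Upsilon_0$, leaving $\mathcal{A}V=-\alpha V$ during every on-phase. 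On the off-phases $\Omega=0$ and $\mathcal{A}V=\Gamma\Tr{\Pi\mathcal{D}[c_d]\rho}+M\Tr{\Pi\mathcal{D}[c_m]\rho}$, which is bounded by some constant $\kappa\ge0$ because $\rho$ is confined to the compact Bloch ball and the dissipators are continuous there.

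The third step is the switching/dwell-time bookkeeping. The rule for $t_k$ keeps the controller engaged while $\Upsilon_1>\varepsilon$ (so the gain $\Upsilon_0/\Upsilon_1$ stays bounded, which is precisely what makes $\Omega$ well defined) and disengages it only in the thin layer $\{|\Upsilon_1|\le\varepsilon\}$ around the stationary set identified in Remark \ref{stationary}; the dwell times $\Delta_1,\Delta_2$ forbid chattering and guarantee each phase has positive length. I would then use Dynkin's formula with the stopping times $\{t_k\}$, noting that the martingale part contributes nothing in expectation: on an on-phase $e^{\alpha t}\mathbb{E}[V]$ is constant, giving $\mathbb{E}[V(t_k)]= e^{-\alpha(t_k-t_{k-1})}\mathbb{E}[V(t_{k-1})]$, while on an off-phase $\mathbb{E}[V(t_k)]\le \mathbb{E}[V(t_{k-1})]+\kappa\,(t_k-t_{k-1})$. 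Composing one full on/off cycle and iterating yields a bound of the form $\mathbb{E}[V(\rho(t))]\le c\,e^{-\lambda t}V(\rho(0))+b'$, where $\lambda>0$ is the per-cycle averaged decay rate and $b'$ is the residual accumulated in the off-layer $\{|\Upsilon_1|\le\varepsilon\}$; this residual is the source of the \emph{practical} offset.

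Finally, I would convert this expectation bound into the probabilistic statement of Definition \ref{def:ApS}. After relating $V$ to the distance $d(\rho,\chi_f)$ — both vanish on, and grow away from, the stationary set, so $V$ is sandwiched between class-$\mathcal{K}$ functions of $d$ — Markov's inequality applied at each fixed $t$ to the nonnegative random variable $V(\rho(t))$ turns $\mathbb{E}[V(\rho(t))]\le c\,e^{-\lambda t}V(\rho(0))+b'$ into $\mathbb{P}\{d(\rho(t),\chi_f)<c'\,d(\rho(0),\chi_f)e^{-\lambda t}+b\}\ge1-\varepsilon$, which is exactly practical exponential stability in probability. I expect the dwell-time/switching analysis to be the main obstacle: one must show that the on-phases are long enough, relative to the off-phases and to the noise-driven excursions through the $\varepsilon$-layer, for the $-\alpha V$ decay to dominate the off-phase growth over each cycle, and, because the $t_k$ are trajectory-dependent stopping times rather than deterministic instants, the composition of the per-phase estimates must be justified via optional stopping rather than by naive iteration.
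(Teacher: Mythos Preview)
Your proposal is correct and follows essentially the same line as the paper: the same Lyapunov function $V(\rho)=\Tr{\Pi\rho}$, the same observation that linearity kills the second-order It\^o term, the same use of $[\Pi,H_d]=0$ to drop the free-Hamiltonian contribution, and the same conclusion that on active intervals the feedback forces $L(V)=-\alpha V$ while on off-intervals $L(V)$ is merely bounded.

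The differences are in emphasis rather than substance. The paper additionally \emph{motivates} the law \eqref{controller} as the pointwise minimizer of $\Omega^2$ subject to $-i\Omega\Upsilon_1+\Upsilon_0\le 0$, which you skip (harmlessly, since you only need to verify the law, not derive it). Conversely, where you carry out the on/off cycle bookkeeping explicitly via Dynkin's formula and optional stopping, and then pass from the expectation bound to the probabilistic statement via Markov's inequality and a $\mathcal{K}$-sandwich of $V$ by $d(\cdot,\chi_f)$, the paper compresses all of this into a single appeal to the switched-systems results of \cite{aguiar2007switched}, concluding directly from $L(V)\le -\eta(V)+\eta_0$. Your version is thus more self-contained on the stochastic side; the paper's is shorter but leans on the cited reference for exactly the step you flag as the main obstacle.
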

\begin{proof}
Given that $\rho_f$ is selected to commute with $\Pi$, it follows that $\rho_f$ becomes a stationary point of the Lyapunov function $V(\rho)$, as discussed in \cite{d2021introduction}. Furthermore, according to Remark \ref{stationary}, the set $\chi_f$ is identified as a stationary set. To investigate its stability, we consider the infinitesimal generator of $\rho(t)$ applied to $V(\rho)$. 
This analysis involves the diffusion process described in \eqref{smeftp} and leverages the Itô formula, resulting in
\begin{equation*}
\!L(V(\rho))\!=\!\Tr{\Pi^T \! (\Delta(\rho) \!+\! \Epsilon(\rho)\Omega(t))}\!+\!\frac{1}{2}\Tr{Z(\rho)H_V Z^T(\rho)}
\end{equation*}
where $L(V(\rho))$ is the quantum Markov generator for the Lyapunov function $V$, 
$\Delta(\rho)=-i\left[{H_d,\rho \left( t \right)} \right] + \Gamma\mathcal{D}\left[ {{c_d}} \right]\rho(t)+M\mathcal{D}\left[ {{c_m}} \right]\rho(t)$, $E(\rho)=-i\left[ {H_c,\rho \left( t \right)} \right]$, $Z(\rho)= \sqrt {\eta M}\mathcal{H}[c_m]\rho(t)$.
The Hessian matrix $H_V$ of $V(\rho)$ is zero in this context. Considering the commutation relation $[\rho_f, {H}_d] = 0$ and the definition $\Pi = I - \rho_f$, we deduce $[\Pi, H_d] = 0$. By employing the trace property $\Tr{A[B,C]} = \Tr{C[A,B]}$, it follows that $\Tr{\Pi[H_d, \rho]}= 0$. Thus, we obtain
\begin{align}\label{LVD}
L(V(\rho)) &= -i\Omega(t)\Tr(\Pi[{H}_c, \rho(t)]) + \Gamma \Tr(\Pi\mathcal{D}[c_d]\rho(t)) \notag \\
&\quad+M \Tr(\Pi\mathcal{D}[c_m]\rho(t)).   
\end{align}
At this point, our objective is to identify a control function $\Omega(t)$ that satisfies $L(V(\rho)) \le -\alpha V(\rho)$, ensuring that $V(t) = V(\rho(t))$ is a supermartingale and 
decreases exponentially with a rate $\alpha > 0$. 
Among the potential solutions, we opt for the one that minimizes the control effort $\Omega^2(t)$ pointwise in time. Hence, we seek $\Omega(t)$ that solves the optimization problem:
\begin{equation*}
\begin{aligned}
     & \underset{\Omega(t)}{\text{minimize}} && \Omega^2(t)\\
     & \text{subject to} && -i\Omega(t) \Upsilon_1 + \Upsilon_0 \le 0.
 \end{aligned}   
 \end{equation*}
The solution to this optimization problem is precisely the control law presented in \eqref{controller}. Caution is advised when $\Upsilon_1$ approaches zero to prevent computational inaccuracies and excessive control signal magnitudes. As a preventive measure, the control strategy switches $\Omega(t)$ to zero once $\Upsilon_1 \le \varepsilon$. To avoid Zeno behavior, characterized by an infinite number of switches within a finite time, we enforce a minimum dwell time between switching, $t_k - t_{k-1} > \Delta$.

Taking now into account the results and reasoning presented in \cite{aguiar2007switched} for switching systems with stable and unstable modes, and acknowledging that $L(V)$ remains bounded when $\Omega = 0$, the practical stability in probability of the set $\chi$ is conclusively established by noticing that $L(V)\le -\eta(V) + \eta_0$, for some class $\mathcal{K}$ function $\eta$ and constant $\eta_0\ge 0$.
\end{proof}

\begin{figure*}[t]
	\centering
	\includegraphics[width=\textwidth]{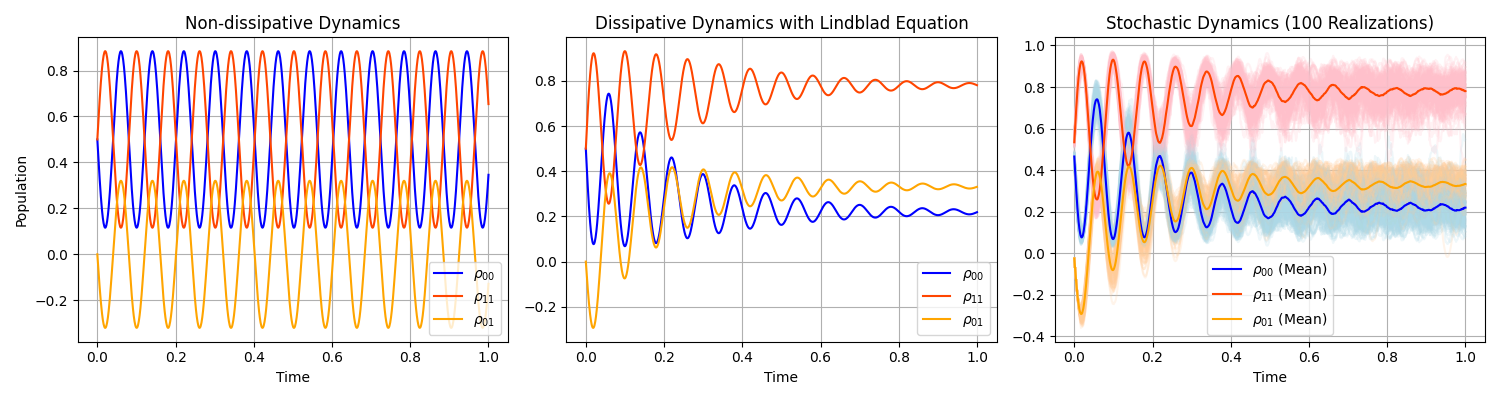}
        \caption{\small Comparative evolution of quantum state populations. This figure illustrates the population dynamics of a quantum system under three different scenarios: (a) Non-dissipative dynamics, (b) Dissipative dynamics with the Lindblad equation, and (c) Stochastic dynamics over 100 realizations. In each subplot, the populations $\rho_{00}(t)$, $\rho_{11}(t)$, and the real part of the off-diagonal $\rho_{01}(t)$ (or $\rho_{10}(t)$) are plotted in time. The non-dissipative dynamics exhibit periodic oscillations, the dissipative dynamics show damped oscillations leading to steady-state populations, and the stochastic dynamics reveal ensemble-averaged behaviors with individual realizations contributing to the shaded regions.}
        \label{dynamics}
\end{figure*}

\begin{thm}
Consider the same conditions of Theorem 1, but now replacing \eqref{controller} by  
\begin{equation}\label{controller1}
\Omega \left( t \right)=
   -i\displaystyle {\frac{{{\hat\Upsilon }_{0}}}{{{\hat\Upsilon }_{1}}}},\quad  t \in [t_{k-1}, t_k),\  k=1, 3, 5,\ldots  
\end{equation}
where ${\hat{\Upsilon }_{0}}=\Gamma \Tr{\Pi \mathcal{D}\left[ {c_d} \right]{\hat{\rho }}}
+ M \Tr{\Pi \mathcal{D}\left[ {c_m} \right]{\hat{\rho }}}+\alpha \Tr{\Pi{\hat{\rho }}}$, with $\alpha\in \mathbb{R}^+$, and ${\hat{\Upsilon }_{1}}=\Tr{\Pi \left[ {{{H_c}}},\hat{\rho} \right]}$ with $\hat\rho$ denoting the output estimate of the quantum observer. Let $\tilde\sigma^2(t)$
be the incremental variance associated to the noisy estimation error $\Tr{\tilde{\rho}}=\Tr{\rho-\hat{\rho}}$, which is considered to be bounded. Then, the set $\chi_f$
is NSpS with respect to $\tilde\sigma^2(t)$.
\end{thm}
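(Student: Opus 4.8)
The plan is to mirror the Lyapunov analysis of Theorem 1, but now propagate the estimation error through the certainty-equivalence control law \eqref{controller1}. First I would compute the quantum Markov generator $L(V(\rho))$ along the true dynamics \eqref{smeftp} with $\Omega(t)$ given by \eqref{controller1}. Since $V(\rho)=\Tr{\Pi\rho}$ is linear in $\rho$ (zero Hessian), the generator reduces to the same drift expression as in \eqref{LVD}, namely $L(V)=-i\Omega\Tr{\Pi[H_c,\rho]}+\Gamma\Tr{\Pi\mathcal{D}[c_d]\rho}+M\Tr{\Pi\mathcal{D}[c_m]\rho}$, except that $\Omega$ now depends on $\hat\rho$ rather than on $\rho$. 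Substituting \eqref{controller1}, recalling $\Upsilon_0=\Gamma\Tr{\Pi\mathcal{D}[c_d]\rho}+M\Tr{\Pi\mathcal{D}[c_m]\rho}+\alpha V(\rho)$, and writing $\hat\rho=\rho-\tilde\rho$, I would obtain the decomposition $L(V(\rho))=-\alpha V(\rho)+\delta$, where the perturbation $\delta=(\Upsilon_0-\hat\Upsilon_0)-(\hat\Upsilon_0/\hat\Upsilon_1)(\Upsilon_1-\hat\Upsilon_1)$ vanishes identically when $\hat\rho=\rho$, recovering Theorem 1.

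Next I would bound $\delta$. Because $\Upsilon_0,\Upsilon_1$ and their hatted counterparts are all affine in their density-operator argument (the superoperators $\mathcal{D}[\cdot]$, the commutator with $H_c$, and the trace against $\Pi$ are linear), the differences $\Upsilon_0-\hat\Upsilon_0$ and $\Upsilon_1-\hat\Upsilon_1$ are linear in $\tilde\rho$ and hence bounded by $c\|\tilde\rho\|$. During the active phase the switching logic guarantees $|\hat\Upsilon_1|>\varepsilon$, so the gain $|\hat\Upsilon_0/\hat\Upsilon_1|$ is bounded and $|\delta|\le C\|\tilde\rho\|$; during the off phase $\Omega=0$ and $L(V)$ is bounded by the dissipative terms alone. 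Combining both phases with the dwell-time argument of Theorem 1 and \cite{aguiar2007switched} yields the perturbed dissipation inequality $L(V)\le-\eta(V)+\eta_0+C\|\tilde\rho\|$ for a class-$\mathcal{K}$ function $\eta$ and a constant $\eta_0\ge0$.

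The third step relates the stochastic perturbation $\|\tilde\rho\|$ to the estimation-error variance $\tilde\sigma^2$. Exploiting that the filtering error has bounded incremental variance $\tilde\sigma^2(t)$, I would bound $\mathbb{E}[\|\tilde\rho(t)\|]$ (via Cauchy--Schwarz and a quadratic-variation estimate of the error dynamics) by a class-$\mathcal{K}$ function of $\sup_{\tau\in[0,t]}\tilde\sigma^2(\tau)$. Taking expectations in the dissipation inequality and applying the stochastic comparison/noise-to-state machinery of \cite{deng2001stabilization, ito2015stability} then produces $\mathbb{E}[V(\rho(t))]\le\beta_0(\mathbb{E}[V(\rho(0))],t)+\gamma_0(\sup_{\tau}\tilde\sigma^2(\tau))+b$ with $\beta_0\in\mathcal{K}\mathcal{L}$ and $\gamma_0\in\mathcal{K}$. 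Finally, converting this expectation bound into the probabilistic form of Definition \ref{def:NSpS} via Markov's inequality, and sandwiching $V(\rho)$ between class-$\mathcal{K}$ functions of the distance $d(\rho,\chi_f)$ (which follows from $V(\rho)=1-\Tr{\rho_f\rho}$ being minimized on $\chi_f$), establishes that $\chi_f$ is NSpS with respect to $\tilde\sigma^2$.

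I expect the main obstacle to be the third step: making the bound $\mathbb{E}[\|\tilde\rho\|]\le\gamma_0(\sup_\tau\tilde\sigma^2)$ rigorous and uniform. The estimation error obeys its own generally nonlinear, noise-correlated filtering dynamics, so one must first argue that this error process is itself input-to-state stable with respect to its driving noise before its variance can be cleanly propagated into the $V$-inequality. Controlling the near-singular gain $\hat\Upsilon_0/\hat\Upsilon_1$ as $\hat\Upsilon_1\to\varepsilon$, and keeping the constant $C$ uniform across switches, is the delicate quantitative part of the argument.
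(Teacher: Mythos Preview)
Your proposal is correct and follows essentially the same approach as the paper: decompose $L(V(\rho))$ into the hatted nominal part plus perturbation terms $\tilde\Upsilon_0,\tilde\Upsilon_1$ linear in $\tilde\rho$, use boundedness of $\Omega$ from the switching guard $|\hat\Upsilon_1|>\varepsilon$, and then invoke the dwell-time argument of Theorem~1 together with \cite{aguiar2007switched,deng2001stabilization} to conclude NSpS. The paper's own proof is in fact terser than yours---it writes $L(V)=-i\hat\Upsilon_1\Omega+\hat\Upsilon_0-i\tilde\Upsilon_1\Omega+\tilde\Upsilon_0$ and directly asserts that $\tilde\Upsilon_0,\tilde\Upsilon_1$ are bounded by class-$\mathcal K$ functions of $\tilde\sigma^2$, giving $L(V)\le-\eta(V)+\eta_0$ whenever $V\ge\eta_1(\tilde\sigma^2)$---so your intermediate $\|\tilde\rho\|$ bound, the Markov-inequality passage to the probabilistic statement, and the obstacle you flag in step three are exactly the details the paper leaves to the references.
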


\begin{proof}
From \eqref{LVD}, it follows that $L(V(\rho))$ satisfies 
\[
L(V(\rho ))=-i{{\hat{\Upsilon }}_{1}}\Omega \left( t \right)+{{\hat{\Upsilon }}_{0}}-i{{\tilde{\Upsilon }}_{1}}\Omega \left( t \right)+{{\tilde{\Upsilon }}_{0}}
\]
where $\tilde{\Upsilon}_0={\Upsilon}_0-\hat{\Upsilon}_0$ and $\tilde{\Upsilon}_1={\Upsilon}_1-\hat{\Upsilon}_1$.
Since $\Omega$ is bounded, $\tilde{\Upsilon}_0$ and $\tilde{\Upsilon}_1$ are functions that can be bounded by class $\mathcal{K}$ functions of $\tilde\sigma^2$, it follows that using the same arguments in the previous proof and  \cite{aguiar2007switched,deng2001stabilization} one can conclude that for $V(\rho)\ge \eta_1(\tilde\sigma^2)$, the generator 
$L(V)\le -\eta(V) + \eta_0$, for some class $\mathcal{K}$ function $\eta$, positive definite function $\eta_1$ and constant $\eta_0\ge 0$. 
Therefore, the results follow.
\end{proof}

\section{Application to Qubits in Leaky Cavities 
} \label{sec:Sim}
The methodology presented here for the illustrated example is universally applicable to any two-level atom experiencing classical influences such as driving, damping, and detuning. Our examination specifically focuses on a qubit interacting with a leaky cavity 
in 
a rotating frame. 
The drift Hamiltonian, $H_{d}=\frac{\omega_R}{2}{\sigma}_3$, incorporates the effective detuning, $\omega_\Delta=\omega_R+\Delta \omega_R-\omega_0$, which represents the adjusted frequency discrepancy of the qubit. Here, $\omega_R$ denotes the frequency gap between the excited $\left| e \right\rangle$ and ground $\left| g \right\rangle$ states, $\omega_0$ signifies the oscillation frequency of the driving field, and $\Delta \omega_R$ refers to the frequency shift. The Pauli operator ${\sigma}_3=\left| g \right\rangle \left\langle g \right|-\left| e \right\rangle \left\langle e \right|$ serves as the inversion operator for the qubit. The control Hamiltonian, $H_c=-{\Omega(t)}{\sigma}_1$, where $\Omega(t)$ is the Rabi frequency, is directly proportional to the transition dipole moment and the amplitude of the driving electromagnetic field. For the annihilation and creation field operators, we utilize the Pauli lowering and raising operators, ${a}={\sigma}_-$ and ${a}^\dagger={\sigma}_+$, respectively. The initial density operator is chosen as $\rho_0 =\small \begin{pmatrix} 0.5 & -0.5i \\ 0.5i & 0.5 \end{pmatrix}$ corresponding to the Bloch vector ${{\mathbf{x}}}_0=[0,1,0]^T$.
The parameters used in simulations are defined as follows. The decay rate (\(\Gamma\)) is set to 10 \(s^{-1}\), representing the rate at which the system loses energy to its environment. The Rabi frequency (\(\omega_R\)) is \(5 \Gamma\) rad/s, characterizing the frequency of oscillation for a two-level system driven by a resonant electromagnetic field. The interaction strength (\(M\)) used in the Lindblad term is set to \(1 \, s^{-1}\), representing the strength of certain dissipative processes. The detection efficiency (\(\eta\)) is set to 0.8, indicating that 80\% of the signals are correctly detected. The time step for numerical integration (\(dt\)) is set to 0.001 seconds to improve accuracy, and the total simulation time (\(T\)) is set to 1.0 second, defining the duration over which the simulation is run. For now, we consider the amplitude of the driving field (\(\Omega\)), which represents the strength of the external field, as \(3 \Gamma\) rad/s. We later change this value with the one obtained from control method. 

Our analysis commences with a comparison across three distinct scenarios, as depicted in Fig. \ref{dynamics}. These include the evolution absent any dissipation or noise, showcasing the ideal oscillatory behavior due to coherent superposition of states; the dissipative dynamics governed by the Lindblad Equation, which simulates the environmental impact on the quantum states leading to oscillation decay; and the stochastic dynamics that introduce stochastic noise to represent random environmental disturbances. 
Notably, all stochastic evolution graphs are plotted across 100 realizations alongside their mean values.

\begin{figure*}[t]
    \centering
    \includegraphics[width=\textwidth]{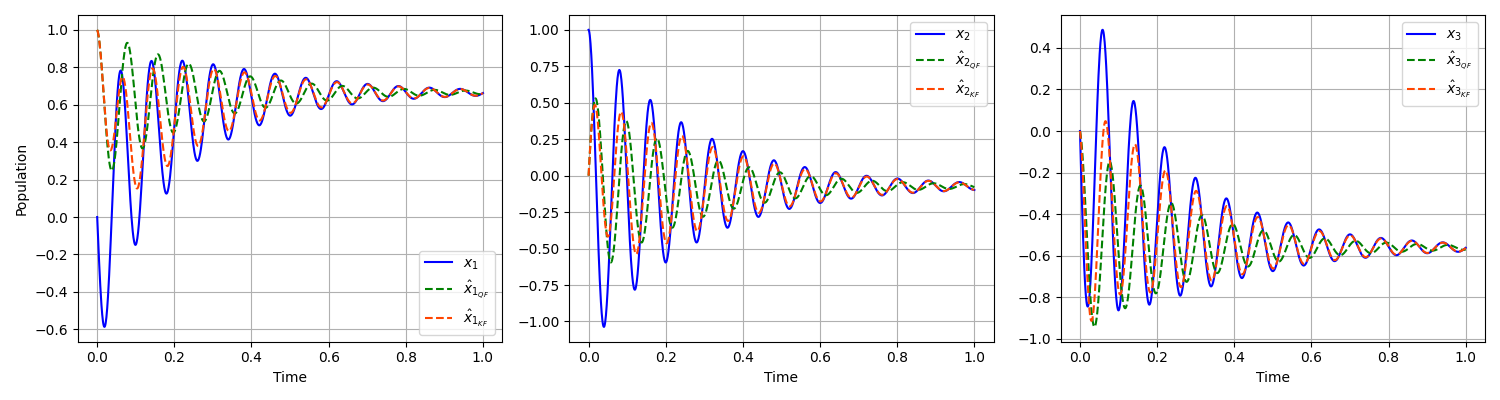} 
    \caption{\small Comparative analysis of true state, Kalman filter, and quantum filter in the \(\mathbf{x}\)-representation. The plots show the population dynamics over time for three different state variables in the coherence vector representation: (a) \(x_1\), (b) \(x_2\), and (c) \(x_3\). The blue solid lines represent the true state (\(x_1, x_2, x_3\)), the green dashed lines represent the estimates obtained from the quantum filter (\(\hat{x}_{1QF}, \hat{x}_{2QF}, \hat{x}_{3QF}\)), and the red dashed lines represent the estimates from the Kalman filter (\(\hat{x}_{1KF}, \hat{x}_{2KF}, \hat{x}_{3KF}\)).}
    \label{xQFKF}
\end{figure*}
\begin{figure*}[t]
    \centering
    \includegraphics[width=\textwidth]{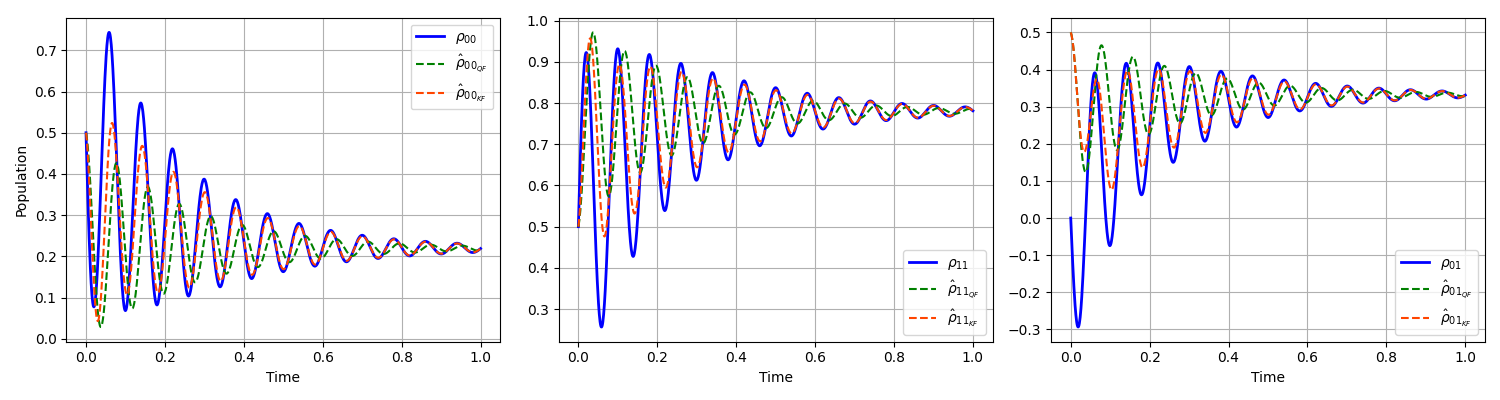} 
    \caption{\small Comparison of true density matrix elements with quantum and Kalman filter estimates. The plots display the evolution of the real parts of the density matrix elements over time, comparing the true values (solid blue lines) with the estimates obtained from the quantum filter (dashed green lines) and the Kalman filter (dashed red lines).}
    \label{RHOfromx}
\end{figure*}

We now consider the quantum observer. First, we illustrate a comparative analysis of quantum filter estimator and quantum Kalman filter estimator for the elements of coherence vectors as depicted in Fig. \ref{xQFKF}. For both filters the initial value is ${\hat{\mathbf{x}} }_0= [1, 0, 0]^T$, corresponding to $\hat\rho_0 =\small \begin{pmatrix} 0.5 & 0.5 \\ 0.5 & 0.5 \end{pmatrix}$.
Analysis shows that both filters closely track the true state across all three variables, with minor deviations occurring primarily during the initial transient phase. The Kalman filter estimates (red dashed lines) show a faster convergence to the true state compared to the quantum filter estimates (green dashed lines), especially evident in the initial peaks. However, both filters ultimately achieve almost similar accuracy in steady-state estimation. The slight deviations observed in the quantum filter during the early phase suggest sensitivity to initial conditions and noise, which is corrected over time. Overall, the effectiveness of both filters in state estimation is demonstrated, with each providing valuable insights into the dynamics of the quantum system.

We now reconstruct density matrix $\hat{\rho}(t)$ from the elements of coherence vector \(\hat{\mathbf{x}}\). This process involves transforming the 
state space representation back into the density matrix form, ensuring that the estimates are consistent with the physical properties of the quantum state. As illustrated in Fig. \ref{xQFKF}, both the quantum filter and Kalman filter estimates converge to the true values for all three elements. This demonstrates the effectiveness of both filters in accurately estimating the state of the quantum system, however, the Kalman filter appears to have a faster convergence compared to the quantum filter, particularly evident in the $\rho_{00}$ and $\rho_{01}$. In the steady-state region, both filters provide estimates that closely match the true values, indicating that both methods are robust for long-term state estimation. Overall, the minor discrepancies observed during the transient phase are corrected over time, and both filters show high fidelity in tracking the true state of the system. This analysis highlights the robustness and accuracy of the quantum and Kalman filters in estimating the state of a quantum system.

We also show the fidelity between the true density operator and the estimated state computed through $ F(\rho,\hat{\rho}) = \left( \Tr{ \sqrt{ \sqrt{\rho} \hat{\rho} \sqrt{\rho} } }  \right)^2$ for both quantum filter and Kalman filter. This measure quantifies how close the estimated density matrix is to the true density matrix, with values ranging from 0 (completely different) to 1 (identical). As illustrated in Fig. \ref{fidelity}, both filters show high fidelity with the true state, particularly in the steady state, indicating accurate tracking of the quantum state. Initially, there are slight discrepancies, but both filters converge to a high fidelity value, demonstrating their effectiveness in quantum state estimation. The transient region shows fluctuations, reflecting the filters' adaptation to the initial state dynamics. The Kalman filter exhibits slightly faster convergence compared to the quantum filter, which is crucial for real-time applications. This figure underscores the robustness and precision of both the Kalman and quantum filters in maintaining high fidelity with the true quantum state over time.
\begin{figure}[t]
    \centering
    \includegraphics[scale=0.4]{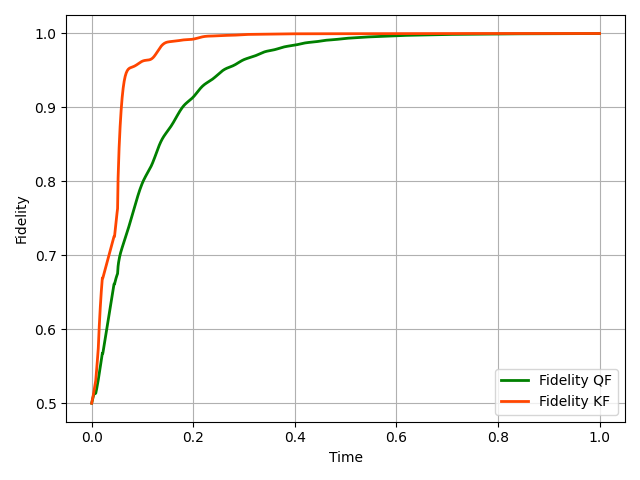} 
    \caption{\small Fidelity comparison over time between true and estimated density matrices. The plot shows the fidelity values between the true density matrix and the estimates obtained from the Kalman filter and the quantum filter over time.}
    \label{fidelity}
\end{figure}

It is important to emphasize that the Kalman filter offers the significant advantage of allowing the tuning of its gains (specifically, the covariances $Q$ and $R$) to enhance performance. In contrast, the Quantum filter lacks this flexibility. 

In order to delve more into the analysis of proposed Kalman filtering, we also show the trace of the error covariance matrix \( P \). The trace of \( P \) serves as a crucial indicator of the Kalman filter's performance, quantifying the evolution of the uncertainty in the state estimates and demonstrating the filter's effectiveness in reducing this uncertainty over time.
As depicted in Fig. \ref{traceP}, the trace of \( P \) is initially high, reflecting significant uncertainty in the initial state estimates. This high uncertainty is typical at the start of the filtering process, where the initial state is often based on a rough estimate or guess. As the Kalman filter processes measurements over time, the trace of \( P \) decreases, indicating that the filter is effectively reducing the uncertainty in the state estimates by incorporating new measurement information. This reduction in uncertainty is evident from the rapid decline in the trace of \( P \) during the initial phase of filtering. Eventually, the trace of \( P \) reaches a steady-state value, signifying that the filter has minimized the uncertainty in the state estimates. The steady-state value of the trace of \( P \) is influenced by the process noise and measurement noise. In an ideal scenario with perfect measurements, the trace of \( P \) would approach zero. However, due to the inherent noise in real systems, a non-zero steady-state value is typical, representing the residual estimation error.
\begin{figure}[t]
    \centering
    \includegraphics[scale=0.4]{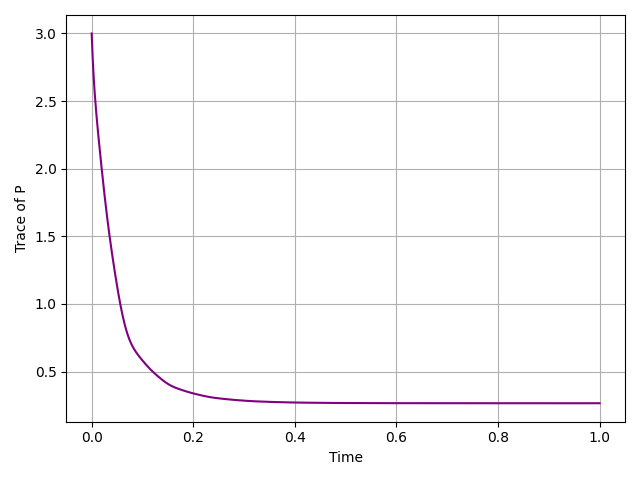} 
    \caption{\small Evolution of the Trace of the Error Covariance Matrix $P$ over Time. The plot shows the trace of the error covariance matrix $P$ as a function of time in the Kalman filtering process. The trace of $P$ is a scalar measure of the total estimation error variance across all state variables.}
    \label{traceP}
\end{figure}

Having implemented both filters, we now delve into the multiple model implementation for both quantum and Kalman filtering approaches. We evaluate the efficacy of MMAEs for quantum state estimation under uncertainties in the parameter $\omega_R$. We consider five different models with $\omega_R$ values set to 0.8, 0.9, 1.0, 1.1, and 1.2 times the nominal value. Figure \ref{multi} presents the time evolution of the quantum state populations $x_1$, $x_2$, and $x_3$, with each row corresponding to a specific $\omega_R$ multiplier and each column representing a state variable. Notably, when $\omega_R$ is exactly the nominal value (1.0), both QF and KF estimates converge to the true states, demonstrating their accuracy and robustness. However, for other $\omega_R$ values (0.8, 0.9, 1.1, 1.2), there are noticeable discrepancies between the true states and the filter estimates. These discrepancies highlight the sensitivity of the system's dynamics to variations in $\omega_R$ and underscore the importance of adaptive estimation techniques. These results underscore the robustness of the MMAE approach, where dynamic weights adjust the contributions of each model based on the observed data. The adaptive nature of the MMAE ensures that the most probable model predominantly influences the state estimation, thereby effectively handling the uncertainties in $\omega_R$. This adaptability is crucial for maintaining accurate state estimates in quantum systems with variable dynamics, providing a reliable framework for practical quantum state monitoring and control applications. 

\begin{figure}[t]
    \centering
    \includegraphics[width=0.5\textwidth]{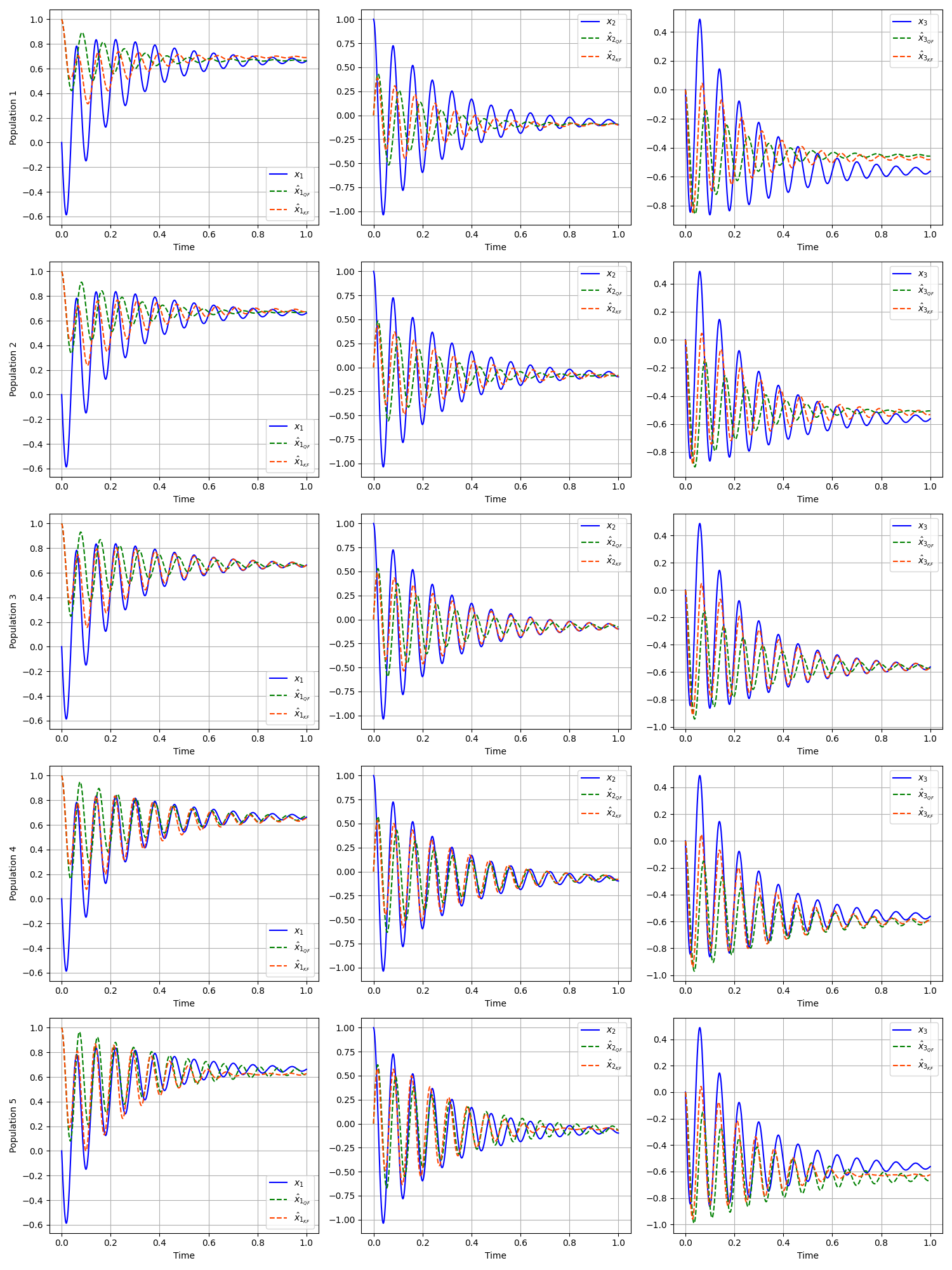} 
    \caption{\small Time Evolution of quantum state populations with multiple models of $\omega_R$. This figure illustrates the dynamic behavior of the quantum state populations $x_1$, $x_2$, and $x_3$ under varying $\omega_R$ values, with true states shown in blue, quantum filter estimates in green, and Kalman filter estimates in orange. Each row corresponds to a different $\omega_R$ multiplier: 0.8 (Population 1), 0.9 (Population 2), 1 (Population 3), 1.1 (Population 4), and 1.2 (Population 5). Columns represent the state variables $x_1$, $x_2$, and $x_3$ respectively. The plots highlight the performance of the QF and KF in tracking the true quantum state across different $\omega_R$ scenarios.}
    \label{multi}
\end{figure}

In addition, we demonstrate the dynamic adaptation of model weights in the MMAE framework. Each weight $p_i$ corresponds to a model with a different $\omega_R$ multiplier.
At the start of the simulation, all weights are initialized equally, reflecting no prior knowledge about which model is most accurate. As the estimation process progresses, the weights dynamically adjust based on the error measures between the observed measurements and the predicted states from each model. This adjustment is governed by a dynamic recursion equation, which allows the MMAE to identify and emphasize the most likely model over time. The results depicted in Fig. \ref{prob} show that the weight $p_3$, corresponding to the nominal $\omega_R$, gradually dominates as the most probable model. This is evidenced by its increase and stabilization at a higher value compared to the other weights, which decrease as the system identifies them as less probable. The convergence of $p_3$ demonstrates the effectiveness of the MMAE in correctly identifying the model that best represents the true system dynamics.
The variations and eventual stabilization of the weights illustrate the MMAE's capability to adapt to the true system parameters, providing robust state estimation even under model uncertainties. This dynamic adaptation is crucial for accurately tracking the quantum state in systems where parameters may not be precisely known or may vary over time.

\begin{figure}[t]
    \centering
    \includegraphics[scale = 0.4]{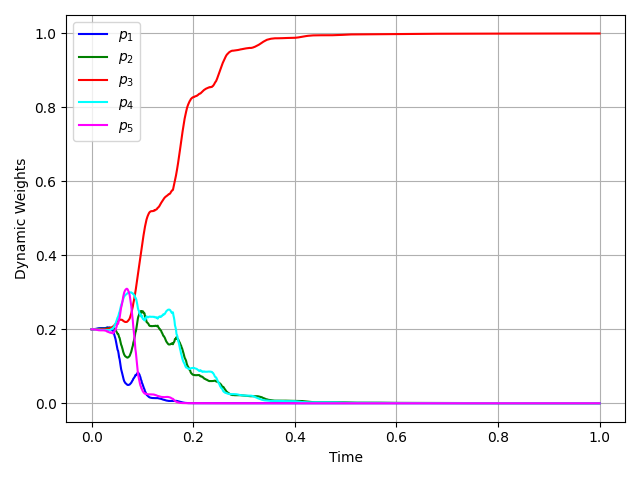} 
    \caption{\small Dynamic evolution of model weights $p_i$ in the MMAE framework. This figure illustrates the time evolution of the dynamic weights $p_1$, $p_2$, $p_3$, $p_4$, and $p_5$ associated with different models in the MMAE. The weights are represented to show their adaptation over time as the MMAE converges to the most probable model.}
    \label{prob}
\end{figure}

We now present our results regarding the dynamic control of the studied quantum system under stochastic influences, aiming to drive the system towards a specified target set. All simulations are conducted over 100 realizations to account for the effects of noise and randomness. As illustrated in Fig. \ref{control}, the first subplot displays the control signal $\Omega(t)$
over time for each of the 100 realizations, shown in light blue together with the mean control signal highlighted. This control signal is computed using feedback from the system's state, adjusted dynamically to steer the system towards the desired target state, using our Lyapunov switching approach. The variability in the control signals across realizations indicates the system's adaptive response to stochastic disturbances. The second subplot shows the evolution of the density matrix elements over time, where now we use the control signal $\Omega(t)$ to achieve the controlled evolution. This plot illustrates that the trajectory of the state populations under the control signal converge towards the desired target. The third subplot focuses on the fidelity of the system's state relative to the target, computed by $ F(\rho,\rho_f) = \left( \Tr{ \sqrt{ \sqrt{\rho_f} \rho \sqrt{\rho_f} } }  \right)^2$ for $\rho_f \in \chi_f$. This plot confirms that the control strategy effectively drives the system towards the target state, as evidenced by the high fidelity values over time.

\begin{figure*}[t]
    \centering
    \includegraphics[width=\textwidth]{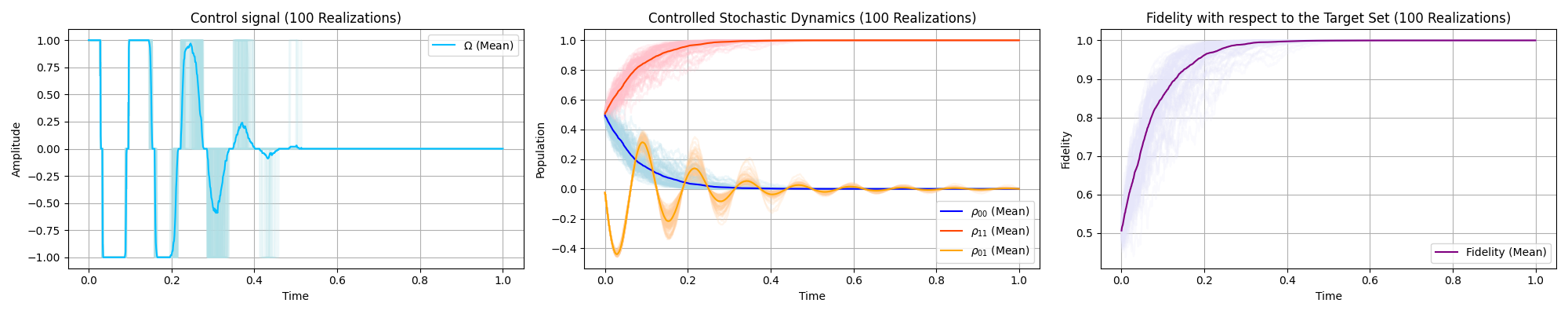} 
    \caption{\small Control signal, controlled state evolution, and fidelity over 100 stochastic realizations. The first subplot displays the control signal $\Omega(t)$ applied over time for 100 realizations (light blue lines) with the mean control signal. The second subplot illustrates the controlled time evolution of the elements of the density operator, depicting how the state evolves towards the target set. The third subplot presents the fidelity with respect to the target state, demonstrating that the target state is achieved with high accuracy.}
    \label{control}
\end{figure*}

\section{Conclusion}

This paper presents a robust framework for the control of stochastic dynamics in quantum systems using a Lyapunov-based control approach with homodyne measurement. To address the problem of estimating the density operator $\rho(t)$ from these measurements, we examined two filtering methods: traditional quantum filtering and an extended Kalman filtering variant, both aimed at estimating the coherence vector elements \( \mathbf{x}(t) \).  
Our results illustrate the superiority of the Kalman based filter in terms of performance over the traditional Quantum filter, largely due to the tunable covariances that allow for faster convergence of the estimation error. Despite the increased complexity in managing the dynamics of a stochastic master equation with correlated noise, the proposed Kalman filter effectively preserves the quantum properties of the estimated state variable $\hat \rho(t)$. We further addressed scenarios with unknown quantum-mechanical Hamiltonians and system uncertainties, where both filters showed degraded performance and increased estimation errors. To counteract this, we proposed a multiple model estimation scheme, enhancing estimation accuracy under uncertainty.
The estimated density operator \( \hat{\rho} \) was then employed in a switching-based Lyapunov control scheme. This control strategy, utilizing \( \hat{\rho} \), achieved noise-to-state practical stability in probability with respect to the estimation error variance. Our approach effectively stabilized a qubit coupled to a leaky cavity under homodyne detection, even with resonance frequency uncertainties. The integration of multiple model estimation with Lyapunov-based control offers a significant advancement in quantum control, ensuring high fidelity and stability despite noise and uncertainties. This work has potential applications in quantum information processing and metrology, with future efforts directed towards extending these methods to more complex quantum systems and real-time implementations. All of the methods for the estimation and control presented in this work can be conveniently extended to N-level systems. However, one has to be careful for extending the Kalman filter to N-level since the projection operator has to be modified such that the updated state remains on a proper subset of the corresponding ball.

\bibliographystyle{IEEEtran}
\bibliography{IEEEabrv,ref}

\end{document}